\documentclass[11pt]{article}
\usepackage[utf8]{inputenc}

\usepackage{epsfig, amssymb, geometry}
\usepackage{latexsym, graphics, graphicx}
\usepackage{proof}

\usepackage{todonotes}

\usepackage{amsthm} 
\usepackage{amsmath}
\usepackage{amsxtra}
\usepackage{float}
\usepackage{authblk}
\usepackage{url}
\usepackage{mathptmx} 
\usepackage{enumerate}
\usepackage[numbers]{natbib}
\usepackage{fancyvrb}
\usepackage{marginnote}
\usepackage{mathtools}

\usepackage{todonotes}

\usepackage{blkarray}

\usepackage{tikz}
\usetikzlibrary{automata,positioning}

\usepackage{amssymb}

\setlength{\parskip}{0.05in}

\newcommand{\ignore}[1]{}

\newcommand{\smalldownarrow}{\mbox{\tiny $\downarrow$}}

\newcommand{\X}{\mathcal{X}}


\newcommand{\Pos}{\mathcal{P}\!\mathit{os}}

\newcommand{\eq}{\mathcal{EQ}}

\newtheorem{defn}{Definition}

\newtheorem{appxlem}{Lemma}[section]

\newcommand{\xddots}{%
  \raise 4pt \hbox {.}
  \mkern 6mu
  \raise 1pt \hbox {.}
  \mkern 6mu
  \raise -2pt \hbox {.}
}

\newcommand*{\Scale}[2][4]{\scalebox{#1}{\ensuremath{#2}}}%

\DefineVerbatimEnvironment{todos}
 {Verbatim}
 {fontsize=\footnotesize}
\title{Asymmetric Unification and Disunification}
\author[1]{Veena Ravishankar} 
\author[2]{Kimberly A. Gero}
\author[1]{Paliath Narendran}
\affil[1]{University at Albany--SUNY}
\affil[2]{The College of Saint Rose}
\affil[ 1]{\textit {\{vravishankar,pnarendran\}@albany.edu}}
\affil[2 ]{\textit {gerok@strose.edu}}
\date{}
\begin{document}
\maketitle

\begin{abstract}
We compare two kinds of unification problems: Asymmetric Unification
and Disunification, which are variants of Equational
Unification. Asymmetric Unification is a type of Equational
Unification where the right-hand sides of the equations are in normal
form with respect to the given term rewriting system. In
Disunification we solve equations and disequations with respect to an
equational theory for the case with free constants.
 We contrast the time complexities of both and show
that the two problems are incomparable: there are theories where one
can be solved in Polynomial time while the other is NP-hard. This goes
both ways. The time complexity also varies based on the termination
ordering used in the term rewriting system.
\end{abstract}

\section{Introduction and Motivation}
This is a short introductory survey on 
two variants of unification, namely asymmetric
unification~\cite{DBLP:conf/cade/ErbaturEKLLMMNSS13} and
disunification~\cite{BaaderSchulzDisunif95,DBLP:conf/birthday/Comon91}. 
We contrast the two in terms of their time complexities for
different equational
theories, for the case where terms in the input can
also have free constant symbols. Asymmetric unification is a new paradigm comparatively,
which requires one side of the equation to be
irreducible~\cite{DBLP:conf/cade/ErbaturEKLLMMNSS13}, while
disunification~\cite{DBLP:conf/birthday/Comon91} deals with
solving equations and disequations. Complexity analysis has been
performed separately on asymmetric
unification~\cite{DBLP:conf/rta/BrahmakshatriyaDGN13,DBLP:conf/fossacs/ErbaturKMMNR14}
and 
disunification~\cite{BaaderSchulzDisunif95,DBLP:journals/jacm/BuntineB94}, 
but not much work has been done on contrasting
the two paradigms.
In~\cite{DBLP:conf/cade/ErbaturEKLLMMNSS13}, it was shown that there are theories which are decidable for symmetric unification but are undecidable for asymmetric unification, so here we investigate this further.
Initially, it was thought that the two are
reducible to one another~\cite{DBLP:conf/fossacs/ErbaturKMMNR14}, but our results indicate
that they are not at least where time complexity is concerned.
In our last section we show that the time complexity of asymmetric unification varies depending on the symbol ordering chosen for the theory.\\
\noindent
Unification deals with solving symbolic equations. A solution to
a unification problem is a \emph{unifier}, a
substitution of certain variables by another
expression or term. Often we need to find \emph{most general unifiers (mgu)}.

For example, given two terms $u=f(a, y)$ and $v=f(x, b)$,
where $f$ is a binary function symbol, $a$ and $b$ are constants, and $x$ and $y$ are
variables, the substitution $\sigma = \{ x \mapsto a, ~ y \mapsto b\}$
unifies $u$ and~$v$.\\

\section{Notations and Preliminaries: Term Rewriting Systems, Equational Unification}

\noindent
We assume the reader is accustomed with the terminologies of term rewriting systems (TRS), equational rewriting~\cite{Term}, unification and equational unification~\cite{BaaderSnyd-01}.

\noindent
\underline{\bf Term Rewriting Systems}:
 A term rewriting system
(TRS)~\cite{Term} is a set of rewrite rules, where a rewrite rule is
an identity $l \, \approx \, r$ such that $l$ is not a variable and
$Var(l)$ $\supseteq$ $Var(r)$. It is often written or denoted as $l \rightarrow
r$. These oriented equations are commonly called {\em rewrite rules\/}.
The rewrite relation induced by~$R$ is written as
$\rightarrow_R^{}$.

A term is \emph{reducible} by a term rewriting system
if and only if a subterm of it is an instance of the left-hand side of
a rule. In other words, a term $t$ is {\em reducible\/} modulo~$R$ if and only if
there is a rule $l \rightarrow r$ in $R$, a subterm $t'$ at
position {\em p\/} of {\em t\/}, and a substitution $\sigma$ such that
$\sigma(l) = t'$. The term~$t[\sigma(r)]_p^{}$ is the result of
{\em reducing t\/} by $l \rightarrow r$ at {\em p}.  The {\em reduction
relation\/} $\rightarrow_{{R}}$ associated with a term rewriting
system~$R$ is defined as follows: $s \, \rightarrow_{{R}} \,
t$ if and only if there exist {\em p\/} in $Pos(s)$ and $l \rightarrow
r$ in~$R$ such that $t$ is the result of reducing $s$ by $l
\rightarrow r$ at~{\em p}, i.e., $t = s[\sigma(r)]_p^{}$.

A term is in \emph{normal form} with respect to a term rewriting
system if and only if no rule can be applied to it. A term
rewriting system is \emph{terminating} if and only if there are
no infinite rewrite chains.

Two terms $s$ and $t$ are said to be {\em joinable\/} modulo a term rewriting
system $R$ if and only if there exists a term~$u$ such that $s \;
\rightarrow_{R}^* \; u$ and $t \; \rightarrow_{R}^* \; u$, denoted as $s\downarrow t$.

The equational theory $\mathcal{E}(R)$ associated with a term rewriting
system~$R$ is the set of equations
obtained from~$R$ by treating every rule as a (bidirectional)
equation. Thus the equational congruence $\approx_{\mathcal{E}(R)}^{}$
is the congruence $({\rightarrow_{R}^{}} \cup {\leftarrow_{R}^{}})^*$.

A term rewriting system $R$ is said to be 
{\em confluent\/} if and only if the following (``diamond'') property holds:
 \[ \forall t \forall u \forall v\
 \left[ \vphantom{b^b} (t \rightarrow^{*}_{R} u \; \wedge \; t \rightarrow^{*}_{R} v)
~ \;  \Rightarrow \; ~
 \exists w (u \rightarrow^{*}_{R} w \; \wedge \; v \rightarrow^{*}_{R} w) \right]
\] $R$
is \emph{convergent} if and only if it is terminating and confluent.
In other words, $R$
is \emph{convergent} if and only if it is terminating and, besides, every term
has a \emph{unique} normal form. 
 
An equational theory $\approx_E^{}$ is said to be \emph{subterm-collapsing}
if and only if there exist terms~$s, \, t$ such that $s \approx_E^{} t$ and
$t$ is a proper subterm of~$s$. Equivalently, 
$\approx_E^{}$ is subterm-collapsing if and only if
$s \approx_E^{} s|_p^{}$ for some term~$s$ and~$p \in \Pos(s)$.
If the theory has a convergent term rewriting system~$R$, then
it is subterm-collapsing if and only if
$s \rightarrow_R^{+} s|_p^{}$ for some term~$s$ and~$p \in \Pos(s)$.
An equational theory is said to be \emph{non-subterm-collapsing}
or \emph{simple}~\cite{BHSS} if and only if
it is not subterm-collapsing.

Equational rewriting facilitates incorporation of equational
theories such as associativity and commutativity, which basic
(``pure'') term rewriting systems cannot handle, since they
cannot (often) be turned into \emph{terminating} rewrite rules. An
equational term rewriting system consists of a set of
identities~$E$ (which often contains identities such as
Commutativity and Associativity) and a set of rewrite
rules~$R$. This gives rise to a new rewrite relation
$\longrightarrow_{R,E}^{}$, which uses equational matching
modulo~$E$ instead of standard matching.

\vspace{0.05in}
\noindent
\underline{Example}: Let $E = \{(x+y)+z \approx x+(y+z), \, x + y \approx y + x \}$ and
$R = \{0+x \rightarrow x\}$. Then \[ (a+0)+b \; ~ \longrightarrow_{R,E}^{} ~ \; a+b \] since
$a + 0$ matches with~$0 + x$ modulo~$E$.\\

\noindent
A set of equations is said to be in {\em dag-solved form\/} (or {\em
d-solved form\/}) if and only if they can be arranged as a
list  \[ X_1 =_{}^? t_1 , \; \ldots , \; X_n =_{}^? t_n \] where
  (a) each left-hand side $X_i$ is a distinct variable, and
  (b) $\forall \, 1 \le i \le j \le n$: $X_i$ does not occur in~$t_j$.

\noindent
\underline{\bf Equational Unification}:
Two terms $s$ and $t$ are unifiable modulo an
equational theory~$E$ iff there exists a substitution $\theta$ such that
$\theta (s) \; {{\approx}_E^{}} \; \theta (t)$. The unification
problem modulo equational theory~$E$ is the problem of solving a set of
equations $\cal S$ = $\{ s_1 \, {\approx}_{E^{}}^? \, t_1 , \ldots , s_n \,
{\approx}_{E^{}}^? \, t_n \} $, whether there exists $\sigma$ such that
$\sigma ( s_1 ) \allowbreak \, {{\approx}_E^{}} \allowbreak \, \sigma
( t_1 ) $, $\cdots \,$, $\sigma ( s_n ) \, {{\approx}_E^{}} \, \sigma
( t_n )$. This is also referred to as \emph{semantic} unification where
equational equivalence~\cite{BaaderSnyd-01} or congruence is considered among the terms
being unified, rather than syntactic identity.
Some of the standard equational theories used are \emph{associativity}
and \emph{commutativity}. 

A unifier $\delta$ is \emph{more general than} another unifer $\rho$ iff a
substitution equivalent to the latter can be obtained from the former
by suitably composing it with a third substitution:
\begin{center}
$ \delta\, {{\preceq}_E^{}} \, \rho \; \;$ iff $\; \; \exists \sigma :  \, \delta \circ \sigma \; {=_E^{}} \; \rho$
\end{center}

A substitution~$\theta$ is a \emph{normalized} substitution
with respect to a term rewrite system~$R$
if and only for every $x$, $\theta(x)$ is in
$R$-normal form. In other words, terms in
the range of~$\theta$ are in normal form. 
(These are also sometimes referred to as
as \emph{irreducible} substitutions.)
When~$R$ is convergent, one can assume that all unifiers
modulo~$R$ are normalized substitutions.


\pagebreak
\section{Asymmetric Unification}

\begin{defn}
Given a \emph{decomposition} $(\Sigma,E,R)$ of an equational theory, a
substitution~$\sigma$ is an asymmetric $R,E$-unifier of a set~$Q$ of
asymmetric equations $\{ s_1 \, \approx_{\smalldownarrow}^? \, t_1 , \, \ldots , \, s_n \,
\approx_{\smalldownarrow}^? \, t_n \} $ iff for each asymmetric equation $s_i \,
\approx_{\smalldownarrow}^? \, t_i$, $\sigma$ is an $(E \cup R)$-unifier of the equation
$s_i \, \approx^? \, t_i$, and $\sigma (t_i)$ is in $R,E$-normal form. In other words,
$\sigma (s_i)  \, \rightarrow_{R,E}^! \, \sigma (t_i)$.
\end{defn}

\noindent
(Note that symmetric unification can be reduced to asymmetric unification. We could also include 
symmetric equations in a problem instance.)\\

\noindent
\underline{\bf Example}:\\ Let $R=\{x+a \rightarrow x \}$ 
be a rewrite system. An asymmetric unifier $\theta$ for 
$\{u+v \, =_{\smalldownarrow}^? \, v+w\}$ 
modulo this system is
$\theta=\{u\mapsto v , \, w\mapsto v\}$. However, another unifier
$\rho=\{u\mapsto a , \, v\mapsto a, \, w\mapsto a\}$ is not
an asymmetric unifier. But note that $\theta\; {{\preceq}_E^{}} \; \rho $,
i.e., $\rho$ is an \emph{instance} of~$\theta$, or, alternatively,
$\theta$ is more general than~$\rho$.
This shows that {instances} of 
asymmetric unifiers need not be asymmetric unifiers.

\section{Disunification}

Disunification deals with solving a set of equations and disequations 
with respect to a given equational theory. 
\begin{defn}
For an equational theory $E$, a disunification problem is a set of equations and disequations $\cal L$ = 
$\{ s_1 \approx^?_{E^{}} t_1 , \ldots , s_n \approx^?_{E^{}} t_n \} \; ~ \bigcup ~ \;
\{s_{n+1}^{} \not\approx^?_{E^{}}
t_{n+1}^{} , \ldots , s_{n+m}^{} \not\approx^?_{E^{}} t_{n+m}^{}  \}$. 
\end{defn}

\noindent
A solution to this problem is a substitution $\sigma$ such
that: \[ \sigma (s_i) \; \approx_E^{} \; \sigma (t_i) \qquad (i=1, \ldots ,n) \] and \[ \sigma (s_{n+j}^{})  \; \not\approx_E^{} \; \sigma (t_{n+j}^{}) 
\qquad (j=1, \ldots , m). \] \\

\noindent
\underline{\bf Example}:\\ Given $E=\{x+a \approx
x \}$, a disunifier $\theta$ for $\{ u + v \, \not\approx_E \, v + u \}$
 is
$\theta=\{u\mapsto a, \, v \mapsto b \}$. 

If $a+x \approx x$ is added to the identities $E$, then $\theta =
\{ u\mapsto a, \, v \mapsto b \}$ is clearly no longer a
disunifier modulo this equational theory.

\pagebreak

\section{A theory for which asymmetric unification is in P whereas
  disunification is NP-complete}

\noindent
Let $R_1^{}$ be the following term rewriting system:
\begin{eqnarray*}
  h(a) & \rightarrow & f(a,c)\\
  h(b) & \rightarrow & f(b,c)
\end{eqnarray*}
\noindent
We show that asymmetric unifiability modulo this theory can be solved
in polynomial time. The algorithm is outlined in Appendix~$A$
(p.\ \pageref{appendix1}--\pageref{appendixAend}).

However, disunification modulo~$R_1^{}$ is NP-hard. The proof is by a
polynomial-time reduction from the three-satisfiability~(3SAT)
problem. 

Let $U=\{ x_1,x_2, \ldots, x_n\}$ be the set of variables, and
$B=\{C_1,C_2, \ldots, C_m\}$ be the set of clauses. Each clause~$C_k$, where
$1 \leq k \leq m$, has 3 literals.


We construct an instance of a disunification problem from 3SAT.  There
are 8 different combinations of T and F assignments to the variables
in a clause in 3SAT, out of which there is exactly one
truth-assignment to the variables in the clause that makes the clause
evaluate to false.  For the 7 other
combinations of T and F assignments to the literals, the clause is
rendered true. We represent T by~\emph{a} and F by~\emph{b}.
Hence for each clause $C_i$ we create a disequation~$DEQ_i$
of the form \[ f(x_p,f(x_q,x_r)) \; \not\approx_{R_1^{}} \; f(d_1,f(d_2,d_3))
\] where $x_p,x_q,x_r$ are variables,
$d_1,d_2,d_3 \in \{a,b\} $, and $(d_1,d_2,d_3)$ corresponds to
the falsifying truth assignment. For example, given a clause
$C_k=x_p\vee \overline{x_q}\vee x_r$, we create the corresponding
disequation $DEQ_k \, = \, f(x_p,f(x_q,x_r)) \not\approx_{R_1^{}}
f(b,f(a,b))$.

We also create the equation $h(x_j)\approx_{R_1^{}}  f(x_j,c)$ for
each variable~$x_j$. These make sure that each $x_j$ is mapped to
either $a$ or~$b$.

Thus for $B$, the instance of disunification constructed is \[ S= \left\{
\vphantom{b^b} h(x_1)\approx f(x_1,c), \, h(x_2)\approx f(x_2,c), \, 
\ldots , \, h(x_n)\approx f(x_n,c) \right\} \; \cup \; \left\{
\vphantom{b^b} DEQ_1,DEQ_2, \ldots , DEQ_m \right\} \]

\noindent
\underline{\bf Example}: Given $U=\{x_1,x_2,x_3\}$ and $B=\{x_1 \vee
\overline{x_2} \vee x_3, \; \; \overline{x_1} \vee \overline{x_2}
\vee x_3 \}$, the constructed instance of
disunification is 
\begin{gather*}
\bigl\{ \vphantom{b^b} h(x_1)\approx f(x_1,c), \; h(x_2)\approx f(x_2,c),
\; h(x_3)\approx f(x_3,c), \; 
f(x_1,f(x_2,x_3)) \not\approx f(b,f(a,b)) , \\
f(x_1,f(x_2,x_3))
\not\approx f(a,f(a,b)) \bigr\}
\end{gather*}

Note that membership in NP is not hard to show since $R_1^{}$ is saturated by paramodulation~\cite{DBLP:conf/cade/LynchM02}.

\pagebreak

\section{A theory for which disunification is in P whereas
  asymmetric unification is NP-hard}

\noindent
The theory we consider consists of the following term rewriting system~$R_2^{}$:
\begin{eqnarray*}
  x + x & \rightarrow & 0\\
  x + 0 & \rightarrow & x\\
  x + (y + x) & \rightarrow & y
\end{eqnarray*}
and the equational theory~$AC$:
\begin{eqnarray*}
  (x + y) + z & \approx & x + (y + z)\\
  x + y & \approx & y + x
\end{eqnarray*}
This theory is called \textbf{ACUN} because it consists of
\emph{associativity, commutativity, unit} and \emph{nilpotence}. This
is the theory of the boolean XOR operator. An algorithm for 
\emph{general} \textbf{ACUN} unification is provided by Zhiqiang
Liu~\cite{Zhiqiang-Liu} in his Ph.D.~dissertation. (See also~\cite[Section~4]{DBLP:conf/cade/ErbaturEKLLMMNSS13}.)

Disunification modulo this theory can be solved in polynomial time by
what is essentially Gaussian Elimination over $\mathbb{Z}_2$.

Suppose we have $m$ variables $x_1,x_2, \ldots , x_m,$ and $n$
constant symbols $c_1,c_2,\ldots,c_n,$ and $q$~such equations
and disequations to be unified. We can assume an ordering on the
variables and constants
$x_1 > x_2 > \ldots > x_m > c_1 > c_2 > \ldots > c_n$. 
We first pick an equation with leading
variable~$x_1$ and eliminate $x_1$ from all \emph{other} equations and disequations. We
continue this process with the next equation consisting of 
leading variable~$x_2$, followed by an equation containing
leading variable~$x_3$ and so on, until no more variables can
be eliminated. The problem has a solution if and only if
$(i)$
there are no
equations that contain only constants, such as $c_3+c_4
\approx c_5$, and
$(ii)$ there are no disequations of the form~$0 \not \approx
0$.  This way we can solve the disunification problem in
polynomial time using Gaussian Elimination over $\mathbb{Z}_2$
technique. \\

\noindent
\underline{\bf Example}: Suppose we have two equations
$x_{1} + x_{2} + x_{3} + c_1 + c_2 \; \approx^?_{{R_2^{}, AC}^{}} \; 0 $ and
$x_{1} + x_{3} + c_2 + c_3 \;  \approx^?_{{R_2^{}, AC}^{}} \; 0 $, and a disequation 
$x_{2} \; {\not\approx}^?_{{R_2^{}, AC}^{}} \; 0$.


Eliminating $x_1$ from the second equation, results in
the equation~$x_2 + c_1 + c_3 \; \approx_{R_2^{}, AC}^{} \; 0$. 
We can now eliminate~$x_2$ from the first equation, 
resulting in~$x_{1} + x_{3} + c_2 + c_3 \; \approx_{R_2^{}, AC}^{} \; 0$.
$x_2$~can also be eliminated from the disequation~$x_{2} \; {\not\approx}_{R_2^{}, AC}^{} \; 0$, which gives us~$c_1 + c_3 \; {\not\approx}_{R_2^{}, AC}^{} \; 0$.
Thus
the procedure terminates with 
\begin{eqnarray*}
x_{1} + x_{3} + c_2 + c_3 & \approx_{R_2^{}, AC}^{} & 0\\
x_{2} + c_1 + c_3 & \approx_{R_2^{}, AC}^{} & 0\\
c_1 + c_3 & {\not\approx}_{R_2^{}, AC}^{} & 0
\end{eqnarray*}
Thus we get
\begin{eqnarray*}
x_{2} & \approx_{R_2^{}, AC}^{} & c_1 + c_3\\
x_{1} + x_{3} & \approx_{R_2^{}, AC}^{} & c_2 + c_3\\
\end{eqnarray*}
and the following substitution is clearly a solution: \[ \left\{
\vphantom{b^b} 
x_1 \mapsto c_2, \;
x_{2} \mapsto c_1 + c_3, \;
x_3 \mapsto c_3 \right\} \]


However, asymmetric unification is NP-hard. The proof is by a
polynomial-time reduction from the graph 3-colorability problem.

Let $G=(V,E)$ be a graph where $V=\{v_1,v_2,v_3, \ldots, v_n\}$ are the
vertices, $E=\{e_1,e_2,e_3, \ldots, e_m\}$ the edges and
$C=\{c_1,c_2,c_3\}$ the color set with $n\geq 3$.
$G$~is 3-colorable if none of the adjacent vertices $\{v_i,v_j\}\in E$
have the same color assigned from~$C$. 
We construct an
instance of asymmetric unification as follows.
We create variables for vertices and edges in~$G$: for each vertex~$v_i$
we assign a variable~$y_i$ and for each edge~$e_k$ we
assign a variable~$z_k$. Now for every edge
$e_k=\{v_i,v_j\}$ we create an equation $EQ_k = c_1 + c_2 + c_3
\approx^?_{\downarrow} y_i + y_j + z_k$. Note that
each $z_k$ appears in only one equation.

Thus for $E$, the instance of asymmetric unification problem
constructed is \[ S = \left\{ \vphantom{b^b} EQ_1, \, EQ_2, \, \ldots , \, EQ_m \right\} \]

If $G$ is 3-colorable, then there is a color assignment $\theta: V
\rightarrow C$ such that $\theta v_i \neq \theta v_j$ if
$e_k=\{v_i,v_j\}\in E$. This can be converted into an asymmetric
unifier~$\alpha$ for~$S$ as follows:
We assign the color of~$v_i$, $\theta (v_i)$ to~$y_i$, $\theta(v_j)$ to~$y_j$,
and the remaining color to~$z_k$. Thus
$\alpha (v_i+v_j+z_k) \approx_{AC}^{} c_1 + c_2 + c_3$ and therefore
$\alpha$ is an asymmetric unifier of~$S$. 
Note that the term
$c_1 + c_2 + c_3$ is clearly in normal form
modulo the rewrite relation~$\longrightarrow_{R_2^{}, AC}^{}$.

Suppose $S$ has an asymmetric unifier~$\beta$. 
Note that $\beta$ cannot map $y_i, \; y_j$ or~$z_k$
to $0$ or to a term of the form $u + v$ since
$\beta( y_i+y_j+z_k )$ has to be in normal form or
irreducible. Hence for each equation $EQ_k$, it must be
that $\beta (y_i), \beta (y_j), \beta (z_k)
\in \{c_1,c_2,c_3\}$ and $\beta (y_i) \not = \beta (y_j) \not = \beta (z_k)$.
Thus
$\beta$ is a 3-coloring of~$G$.  \\

\noindent
\underline{\bf Example}: Given $G=(V,E), V=\{v_1,v_2,v_3,v_4\}$,
$E=\{e_1,e_2,e_3,e_4\}$, where
$e_1 =\{v_1,v_3\}, \; e_2=\{v_1,v_2\}, \; e_3=\{v_2,v_3\}, \; e_4=\{v_3,v_4\}$ and
$C=\{c_1,c_2,c_3\}$, the constructed instance of asymmetric
unification is
\begin{center}
$EQ_1 ~ = ~ c_1 + c_2 + c_3 \; \;  \approx^?_{\downarrow} \; \;  y_1 + y_3 +
z_1$\\ $EQ_2 ~ = ~ c_1 + c_2 + c_3 \; \;  \approx^?_{\downarrow} \; \;  y_1 + y_2 +
z_2$\\ $EQ_3 ~ = ~ c_1 + c_2 + c_3 \; \;  \approx^?_{\downarrow} \; \;  y_2 + y_3 +
z_3$\\ $EQ_4 ~ = ~ c_1 + c_2 + c_3 \; \;  \approx^?_{\downarrow} \; \;  y_3 + y_4 + z_4$.\\
\end{center}

Now suppose the vertices in the graph $G$ are given this color
assignment: $\theta = \{v_1\mapsto c_1, v_2 \mapsto c_2, v_3\mapsto c_3,
v_4\mapsto c_1\}$. We can create an asymmetric
unifier based on this~$\theta$ by mapping each $v_i$ to~$\theta(v_i)$
and, for each edge~$e_j$, mapping $z_j$ to
the remaining color from~$\{c_1,c_2,c_3\}$
after both its vertices are assigned. For instance,
for $e_1 =\{v_1,v_3\}$, since $y_1$ is mapped to $c_1$ and
$y_3$ is mapped to~$c_2$, we have to map
$z_1$ to~$c_3$. Similarly for $e_2=\{v_1,v_2\}$, we map
$z_2$ to~$c_2$ since $y_1$ is mapped to $c_1$ and
$y_2$ is mapped to~$c_3$. Thus the asymmetric unifier is \[
\left\{ \vphantom{b^b}
y_1\mapsto c_1, \; y_2\mapsto c_3, \;
y_3\mapsto c_2, \; z_1\mapsto c_3, \; z_2\mapsto c_2, \;
z_3\mapsto c_1, \;
z_4\mapsto c_3
\right\} \]

\noindent
We have not yet looked into whether the problem is in~NP, but we expect it to be so.

\pagebreak
\section{A theory for which ground disunifiability is in P whereas
  asymmetric unification is NP-hard}

This theory is the same as the one mentioned in previous section,
\textbf{ACUN}, but with a homomorphism added. It has an $AC$-convergent
term rewriting system, which we call~$R_3^{}$:

\vspace{0.05in}
\noindent
\begin{eqnarray*}
  x + x & \rightarrow & 0\\
  x + 0 & \rightarrow & x\\
  x + (y + x) & \rightarrow & y\\
  h(x+y) & \rightarrow & h(x) + h(y)\\
  h(0) & \rightarrow & 0
\end{eqnarray*}

\noindent
\subsection{Ground disunification} 
Ground disunifiability~\cite{BaaderSchulzDisunif95} problem refers to checking for
ground solutions for 
a set of disequations
and equations. The restriction is that only the set of constants provided in the input,
i.e., the equational theory and the equations and disequations,
can be used; no new constants can be introduced.

We show that ground disunifiability modulo this theory can be solved
in polynomial time, by reducing the problem to that of solving systems
of linear equations. This involves finding the Smith Normal
Form~\cite{Greenwell09solvinglinear,KANNAN198569,kaltofen1987fast}. This
gives us a general solution to all the variables or unknowns.

Suppose we have $m$ equations in our ground disunifiability
problem. We can assume without loss of generality that the
disequations are of the form~$z \neq 0$. For example, if we have
disequations of the form $e_1 \neq e_2$, we introduce a new variable
$z$ and set $z = e_1 + e_2$ and $z \neq 0$. Let $n$ be the number of
variables or unknowns for which we have to find a solution.


For each constant in our ground disunifiability problem, we follow the
approach similar to~\cite{guo2000complexity}, of forming a set of linear
equations and solving them to find ground solutions. \\

We use $h^k x$ to represent the
term $h(h( \ldots h(x) \ldots ))$ and $H^k = h^{k_1}x + h^{k_2}x + \cdots +
h^{k_n}x$ is a polynomial over $\mathbb{Z}_2[h]$.\\[-5pt]

We have  \begin{center}$s_i = H_{i1}x_{1} + H_{i2}x_{2} + \ldots +H_{im}x_{n}, \; \; H_{ij} \in \mathbb{Z}_2[h] $\end{center}
 \begin{center}$t_i = H^{'}_{i1} c_1^{} + H^{'}_{i2} c_2^{} + \ldots +H^{'}_{im} c_l^{} , \; \; H^{'}_{ij} \in \mathbb{Z}_2[h] $\end{center}
where \begin{enumerate}
\item[] $\qquad \{c_1^{}, \ldots c_l^{}\}$ is the set of constants and\\[-18pt]
\item[] $\qquad \{x_1, \ldots x_n\}$ is the set of variables.
\end{enumerate}

For each constant $c_i, 1\leq i \leq l$, and each variable~$x$, we
create a variable $x^{c_i}$.
We then 
generate, for each constant~$c_i$, a set of linear equations $S^{c_i}$ 
of the form $AX =^?  B$ with coefficients from the polynomial
ring~$\mathbb{Z}_2[h]$.

 
The solutions are found 
by computing the Smith Normal
Form of~$A$. We now outline that procedure\footnote{We follow the notation and procedure similar to Greenwell and Kertzner~\cite{Greenwell09solvinglinear}}:

Note that the dimension of matrix $A$ is $m \times n$
where $m$ is the number of equations and 
$n$ is the number of unknowns. The dimension of
of matrix $B$ is $ m \times 1$. Every matrix $A$, of rank~$r$, is
equivalent to a diagonal matrix~$D$, given by
\begin{center}$D = diag(d_{11}, d_{22}, \ldots d_{rr}, 0, \ldots, 0)$\end{center}
Each entry $d_{kk}^{}$ is different from 0 and 
the entries form a divisibility sequence. \\

The diagonal matrix $D$, of size $m \times n$, is the Smith Normal
Form (SNF) of matrix~$A$.  There exist invertible \emph{matrices} $P$, of size
$m \times m$, and $Q$, of size $n \times n$ such that

\begin{equation}D = PAQ \end{equation}

and let \begin{center}$\overline{D} = diag(d_{11}, d_{22}, \ldots, d_{rr})$\end{center}
be the submatrix
 consisting of the first $r$~rows and
the first $r$~columns of~$D$.\\

\noindent
Suppose $AX = B$. We have, from $(1)$,
\begin{center}$PAX = PB$\end{center} 
Since $Q$ is invertible we can write
\begin{center}$PAQ (Q^{-1}X) = PB$ \end{center}
Let $C = PB$ and 
\begin{center}$ Y = (Q^{-1}X) = 
\begin{bmatrix}\; \overline{Y} \; \\ Z \end{bmatrix}$ \end{center}
with $\overline{Y}$ being first $r$ rows of the $n \times 1$ matrix $Y$, 
and $Z$ the remaining $(n-r)$ rows of $Y$.

\begin{center} $C \text{ can be written as }
\begin{bmatrix} \; \overline{C} \; \\ U \end{bmatrix}$ \end{center}
with $\overline{C}$ the first $r$ rows of $C$, and $U$ a matrix of zeros.\\

Then $DY = PB = C$ translates into\\
\begin{equation*}
\begin{bmatrix} \; \overline{D} & 0 \; \\ 0 & 0 \end{bmatrix}
\begin{bmatrix} \; \overline{Y} \; \\ Z \end{bmatrix}
= \begin{bmatrix} \; \overline{C} \; \\ U \end{bmatrix}
\end{equation*}

We solve for $Y$ in $DY = C$, by first solving 
$\overline{D} \, \overline{Y} = \overline{C}$:
\begin{equation*}
\begin{bmatrix}d_{11} & & \\ & \ddots & \\ & & d_{rr}\end{bmatrix}
\begin{bmatrix} y_1\\ y_2 \\ y_3\\ \vdots \\y_r \end{bmatrix}
= \begin{bmatrix} c_1\\ c_2 \\ c_3\\ \vdots \\c_r \end{bmatrix}
\end{equation*}
A solution exists if and only if each $d_{ii}$ divides $c_i$. If this
is the case let $\widehat{y_i} = c_i / d_{ii}$. Now to find a general
solution plug in values of~$Y$ in~$X = QY:$\\

\[
\begin{blockarray}{cccccc}
      & \Scale[1]{r}  & & & \Scale[1]{n - r} &\\
    \begin{block}{[ccc|ccc]}
      &                   &  & &   &   \\
      &                   &   &     & &          \\
      & \Scale[1]{Q_1^{}}  &   &   & \Scale[1]{Q_2^{}}    &      \\
      &                   &   &   & & \\
      &                   &   & & & \\
    \end{block}
\end{blockarray}
\begin{bmatrix} \widehat{y_1}\\ \widehat{y_2} \\ \vdots \\ \widehat{y_r} \\ z_{r+1}\\ \vdots \\ z_n \end{bmatrix}
\]

First $r$ columns of $Q$ are referred to as $Q_1$ and remaining $n-r$
columns are referred to as $Q_2$. To find a particular solution,
for any~$x_j$,  we take the dot product of the $j_{}^{\mathrm{th}}$~row of~$Q_1$ and 
$\left( \widehat{y_1}, \ldots , \widehat{y_r} \right)$. 

Similarly,
to find a general solution, we take the dot product of $i^{th}$ row of
$Q_1$ with $\left( \widehat{y_1}, \ldots ,\widehat{y_r} \right)$, plus the dot product of 
the $i^{th}$ row of~$Q_2$, with a vector 
$\left( z_{r+1}, \ldots , z_n \right)$ consisting of distinct variables.

If we have a disequation of the form $x_i \neq 0$, to check for 
solvability for~$x_i$, we first check whether the ~particular~ solution is~$0$. 
If it is not, then we are done. Otherwise, check whether
all the values in $i^{th}$ row of
$Q_2$ are identically~0. If it is not, then we
have a solution since $z_{r+1}, \ldots , z_n$ can take any
arbitrary values.
This procedure has to be repeated for all constants.\\
 
\subsection{Ground Asymmetric Unification} 
However, asymmetric unification modulo~$R_3^{}$ is NP-hard.
Decidability can be shown by automata-theoretic methods as for Weak
Second Order Theory of One successor
(WS1S)~\cite{Elgot,buchi1960weak}.\\

In WS1S we consider quantification over finite sets of natural
numbers, along with one successor function. All equations or formulas
are transformed into finite-state automata which accepts the strings that
correspond to a model of the
formula~\cite{klaedtke2002parikh,vardi2008automata}. This automata-based
approach is key to showing decidability of~WS1S, since the
satisfiability of WS1S formulas reduces to the automata
intersection-emptiness problem. We follow the same approach here. 

For ease of exposition, let us
consider the case where there is only one constant~$a$. Thus every
ground term can be represented as a set of natural numbers. The
homomorphism~$\mathsf{h}$ is treated as a successor
function. Just as in WS1S, the input to the automata are column vectors of bits. The length of each column vector
is the number of variables in the problem.

\begin{center} $\Sigma=\left\{ \vphantom{b^b}
          \begin{psmallmatrix}
           0 \\
           0 \\
          \vdots \\
           0 
          \end{psmallmatrix}, \ldots ,\begin{psmallmatrix}
           1 \\
           1 \\
           \vdots \\
           1 
          \end{psmallmatrix} \right\}$\end{center} 

The deterministic finite automata (DFA) are illustrated in
Appendix~$C$ (p.\ \pageref{appendix3}--\pageref{appendix4}). 
The $\mathsf{+}$~operator behaves like the 
\emph{symmetric set difference} operator.\\

Once we have automata constructed for all the formulas, we take the
intersection and check if there exists a string accepted by
corresponding automata. If the intersection is not empty, then we have a
solution or an asymmetric unifier for set of formulas.

This technique can be extended to the case
where we have more than one constant. 
Suppose we have $k$ constants, say~$c_1 , \ldots , c_k$.
We express each variable~$X$ in terms of the constants as follows: \[
X ~ = ~ X_{}^{c_1} + \ldots + X_{}^{c_k} \] effectively grouping
subterms that contain each constant under a new variable. Thus
if $X = h^2 (c_1) + c_1 + h(c_3)$, then 
$X_{}^{c_1} = h^2 (c_1) + c_1$, $X_{}^{c_2} = 0$, and
$X_{}^{c_3} = h(c_3)$. If the variables are~$X_1, \ldots , X_m$,
then we set
\begin{align*}X_1 &= X_1^{c_1}+ \ldots +X_1^{c_k} \\X_2 &= X_2^{c_1}+  \ldots  +X_2^{c_k} \\ &\vdots \\ X_m &= X_m^{c_1}+ \ldots  +X_m^{c_k}\end{align*}

For example, if $\mathsf{Y}$ and $\mathsf{Z}$ are set variables and
$a,b,c$ are constants, then we can write $\mathsf{Y}$ = $\mathsf{Y^a+Y^b+Y^c}$
and $\mathsf{Z = Z^a+Z^b+Z^c}$ as our terms with constants.
For each original variable, say~$Z$, we refer to 
$Z^{c_1}$ etc.\ as its \emph{components} for ease of exposition.

If the equation to be solved is $:\mathsf{X= h(Y)}$, with
$\mathsf{a,b,c}$ as constants, then we create the equations
$\mathsf{X^a = h(Y^a)}$, $\mathsf{X^b = h(Y^b)}$, 
$\mathsf{X^c = h(Y^c)}$. However, if the equation
is asymmetric, i.e., $\mathsf{X =_{\downarrow} h(Y)}$, then
$\mathsf{Y}$ has to be a term of the form~$h_{}^i (d)$
where $d$~is either $a$, $b$, or~$c$.
All components except one have to be $0$
 and we form the equation~$\mathsf{X^d =_{\downarrow}^{} h(Y^d)}$ since $\mathsf{Y \not= 0}$.
The other components for $\mathsf{X}$ and $\mathsf{Y}$
have to be~0.

Similarly, if the equation to be solved is $\mathsf{X = W+Z}$, with $\mathsf{a,b,c}$ as constants,
we form the equations $\mathsf{X^a=W^a+Z^a}$, $\mathsf{X^b=W^b+Z^b}$ and
$\mathsf{X^c=W^c+Z^c}$ and solve the equations. If we have an asymmetric equation 
$\mathsf{X =_{\downarrow} W+Z}$, 
then clearly one of
the components of each original variable has to be non-zero; e.g., in
$\mathsf{W=W^a+W^b+W^c}$, all the components cannot be~$0$
simultaneously. 
It is ok for $\mathsf{W^a}$ and
$\mathsf{Z^a}$ to be~$0$ simultaneously, provided either 
one of
$\mathsf{W^b} \text{ or } \mathsf{W^c}$  is non-zero \emph{and}
one of 
$\mathsf{Z^b} \text{ or } \mathsf{Z^c}$, is non-zero. 
For example, $\mathsf{W=W^b}$ and $\mathsf{Z=Z^c}$ is fine, 
i.e, $\mathsf{W}$ can be equal to its
$\mathsf{b}$-component and $\mathsf{Z}$ can be 
equal to its $\mathsf{c}$-component,
respectively, as in the solution $\left\{ \vphantom{b^d}
W \mapsto h^2(b) + h(b), ~ Z \mapsto h(c) + c, ~
X \mapsto h^2(b) + h(b) + h(c) + c \right\}$.
If $\mathsf{W^a}$ and $\mathsf{Z^a}$ are
non-zero, they cannot have anything in common, or otherwise there will be
a reduction. In other words, $\mathsf{X^a}$, $\mathsf{W^a}$ and $\mathsf{Z^a}$ 
must be solutions of the asymmetric equation~$\mathsf{X^a} =_{\downarrow}^{} 
\mathsf{W^a} + \mathsf{Z^a}$.\\

Our approach is to design a nondeterministic
algorithm. We 
guess which constant component in each variable has to be~$0$, i.e.,
for each variable~$X$ and each constant~$a$, we ``flip a coin''
as to whether $X^a$ will be set equal to~$0$ by the target solution. Now for the case
$\mathsf{X =_{\downarrow} W+Z}$, we do the following:
\begin{quote}
\begin{tabbing}
for \= all constants $a$ do:\\
    \> if $\mathsf{X^a = W^a = Z^a = 0}$ then skip\\
    \> else \= if $\mathsf{W^a = 0}$ then set $\mathsf{X^a = Z^a}$\\
    \>      \> if $\mathsf{Z^a = 0}$ then set $\mathsf{X^a = W^a}$\\
    \>      \> if both $\mathsf{W^a}$ and $\mathsf{Z^a}$ are non-zero then set
$\; \mathsf{X^a =_\downarrow W^a + Z^a}$
\end{tabbing}
\end{quote}

In the asymmetric case $\mathsf{X =_{\downarrow} h(Y)}$, if more than one
of the components of~$\mathsf{Y}$ happens to be~non-zero, it is clearly an error.
(``The guess didn't work.''). Otherwise, i.e., if exactly one of the 
components is non-zero, we form
the asymmetric equation as described above. \\

\noindent
$\underline{\emph{Nondeterministic Algorithm when we have more than one constant}}$
\begin{enumerate}
\item If there are $m$ variables and $k$ constants, then 
  represent each variable in terms of its $k$~constant components.
\item Guess which constant components have to be~$0$.
\item Form symmetric and asymmetric equations for each constant.
\item Solve each set of equations by the Deterministic Finite Automata
  (DFA) construction.
\end{enumerate}

\noindent
The exact complexity of this problem is open.

    
\pagebreak
\section{A theory for which time complexity of Asymmetric Unification varies based on ordering of function symbols}
\noindent
Let $E_4^{}$ be the following equational theory:
\begin{eqnarray*}
  g(a) & \approx & f(a,a,a)\\
  g(b) & \approx & f(b,b,b)
\end{eqnarray*}
\noindent

\noindent
Let $R_4$ denote 
\noindent
\begin{eqnarray*}
  f(a,a,a) & \rightarrow & g(a)\\
  f(b,b,b) & \rightarrow & g(b)
\end{eqnarray*}
\noindent
This is clearly terminating, as can be easily shown by the 
\emph{lexicographic path ordering (lpo)}~\cite{Term} using
the symbol ordering~$f > g > a > b$.
We show that asymmetric unification modulo the rewriting system $R_4^{}$ is NP-complete. 
The proof is by a
polynomial-time reduction from the Not-All-Equal Three-Satisfiability ~(NAE-3SAT)
problem~\cite{DBLP:conf/rta/BrahmakshatriyaDGN13}. 

Let $U=\{ x_1,x_2, \ldots, x_n\}$ be the set of variables, and
$C=\{C_1,C_2, \ldots, C_m\}$ be the set of clauses. Each clause~$C_k$,
has to have at least one \emph{true} literal and at least one \emph{false}
literal.

We create an instance of asymmetric unification as follows. We
represent T by~\emph{a} and F by~\emph{b}. For each variable~$x_i$ we
create the equation \begin{center}$f(x_i, x_i, x_i)\approx_{R_4^{}}
  g(x_i)$\end{center} These make sure that each $x_i$ is mapped to
either $a$ or~$b$. For each clause $C_j=x_p\vee x_q \vee x_r$, we
introduce a new variable $z_j$ and create an asymmetric equation
$EQ_j:$
\begin{center} $z_j
\approx^?_{\downarrow} f(x_p, x_q, x_r)$ \end{center}

Thus for any~$C$, the instance of asymmetric unification problem
constructed is  \[ \mathcal{S}= \left\{
\vphantom{b^b} f(x_1, x_1, x_1) \approx g(x_1), \,
\ldots , \, f(x_n, x_n, x_n) \approx g(x_n) \right\} \; \cup \; \left\{
\vphantom{b^b} EQ_1,EQ_2, \ldots , EQ_m \right\} \] 

If $\mathcal{S}$ has an asymmetric unifier $\gamma$, then, $x_p, x_q$ and $x_r$
cannot map to all $a$'s or all $b$'s since these will cause a reduction.
Hence for $EQ_j$,
$\gamma(x_p)$, $\gamma(x_q)$ and $\gamma(x_r)$ should take at least
one $\emph{a}$ and at least one $\emph{b}$. Thus $\gamma$ is also
a solution for NAE-3SAT.

Suppose, all clauses in $C$ have a satisfying assignment. Then $\{x_p,
x_q, x_r\}$ cannot all be~T or all~F, i.e., $\{x_p, x_q,
x_r\}$ needs to have at least one true literal and at least
one false literal. Thus if $\sigma$ is a satisfying assignment, we can
convert $\sigma$ into an asymmetric unifier~$\theta$ as follows:
$\theta(x_p) := \sigma(x_p)$, the value of 
$\sigma(x_p)$, $\emph{a}$ or $\emph{b}$, is assigned to $\theta(x_p)$. 
Similarly $\theta(x_q) := \sigma(x_q)$ and
$\theta(x_r) := \sigma(x_r)$. Recall that we also introduce a unique variable~$z_j$
for each clause~$C_j$ in~$C$.  Thus
if $C_j = \{ x_p, x_q, x_r \}$ we can map $z_j$
to~$\theta {(f(x_p, x_q, x_r))}$.  Thus
$\theta$ is an asymmetric unifier of~$S$ and
$z_j~\approx^?_{\downarrow}~f(x_p, x_q, x_r)$.~Note that $f(x_p, x_q, x_r)$
is clearly in normal form modulo the rewrite
relation~$\longrightarrow_{R_4^{}}^{}$, since $x_p,\;x_q,\;x_r$ can't all be same.

\vspace{0.1in}
\noindent
\underline{\bf Example}: Given $U=\{x_1, x_2, x_3, x_4\}$ and $C=\{x_1 \vee
x_2 \vee x_3, \; \; x_1 \vee x_2
\vee x_4, \; \; x_1 \vee x_3
\vee x_4 , \; \; x_2 \vee x_3
\vee x_4 \}$ the constructed instance of
asymmetric unification $\mathcal{S}$ is 
\begin{gather*}
\left\{ \vphantom{b^b} f(x_1, x_1, x_1)\approx g(x_1), \; f(x_2, x_2, x_2)\approx g(x_2),
\; f(x_3, x_3, x_3)\approx g(x_3), \; f(x_4, x_4, x_4)\approx g(x_4), \; \right. \\
z_1 \; \;  \approx^?_{\downarrow} \; \;  f(x_1 , x_2 , x_3), \\
z_2 \; \;  \approx^?_{\downarrow} \; \;  f(x_1 , x_2 , x_4), \\
z_3 \; \;  \approx^?_{\downarrow} \; \;  f(x_1 , x_3 , x_4), \\
\left. \vphantom{b^b}
z_4 \; \;  \approx^?_{\downarrow} \; \;  f(x_2 , x_3 , x_4)  \right\}
\end{gather*}
Again, membership in NP can be shown using the fact
that $R_4^{}$ is saturated by paramodulation~\cite{DBLP:conf/cade/LynchM02}

However, if we orient the rules the other way, i.e., when $g > f > a > b$, 
we can show that asymmetric unifiability modulo this theory can be solved
in polynomial time, i.e., when the 
term rewriting system is
\begin{eqnarray*}
  g(a) & \rightarrow & f(a,a,a)\\
  g(b) & \rightarrow & f(b,b,b)
\end{eqnarray*}
Let $R_5^{}$ denote the above term rewriting system.
The algorithm is outlined in Appendix~$B$
(p.\ \pageref{appendix2}--\pageref{appendixBend}). \\
\pagebreak
\bibliography{unif-asymm-dis-copy3}
\bibliographystyle{plain}

\pagebreak

\appendix

\renewcommand\thesection{\Alph{section}}
\section{Asymmetric Unifiability modulo $R_1^{}$}

\label{appendix1}

\noindent
Recall that the term rewriting system~$R_1^{}$ is
\begin{eqnarray*}
h(a) & \rightarrow & f(a, c) \\
h(b) & \rightarrow & f(b, c)
\end{eqnarray*}

Note that reversing the directions of the rules also produces a 
convergent system, i.e.,
\begin{eqnarray*}
f(a, c) & \rightarrow & h(a) \\
f(b, c) & \rightarrow & h(b)
\end{eqnarray*}
is also terminating and confluent.
We assume that the input equations are in standard form, i.e.,
of one of four kinds: $X \approx_{}^? Y$, $X \approx_{}^? h(Y)$,
$X \approx_{}^? f(Y, Z)$ and $X \approx_{}^? d$ where $X, Y, Z$~are variables
and $d$~is any constant. Asymmetric equations will have the extra downarrow,
e.g., $X \approx_{\downarrow}^{?} h(Z)$.\\

Our algorithm transforms an asymmetric unification problem to 
a set of equations in \emph{dag-solved form}
along
with \emph{clausal constraints,} where each atom 
is of the form \mbox{$( \langle variable \rangle \; = \; \langle constant \rangle )$}. 
We use the notation
$EQ \; \parallel \; \Gamma$, where $EQ$ is set of equations in
standard form as mentioned above, and $\Gamma$~is a set of clausal
constraints. Initially $\Gamma$ is empty.\\

\noindent
\begin{appxlem}  (Removing asymmetry) 
If $s$ is an irreducible term, then $h(s)$ is $($also$)$ irreducible
iff $s \not = a $ and $s \not = b$.
\end{appxlem}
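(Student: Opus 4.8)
The claim is: for an irreducible term $s$ (irreducible with respect to $R_1^{}$, i.e., with respect to the rules $h(a) \to f(a,c)$ and $h(b) \to f(b,c)$), the term $h(s)$ is irreducible iff $s \neq a$ and $s \neq b$. The plan is to argue both directions by inspecting where a redex in $h(s)$ could possibly sit, using the very simple shape of the left-hand sides of $R_1^{}$.

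First I would prove the easy direction ($\Leftarrow$): suppose $s$ is irreducible and $s \neq a$, $s \neq b$. A redex in $h(s)$ must be a subterm that is an instance of $h(a)$ or $h(b)$. Since these left-hand sides are ground, an instance is literally $h(a)$ or $h(b)$. Any such subterm is either entirely inside $s$ — impossible, since $s$ is irreducible — or it is the whole term $h(s)$ itself, which would force $s = a$ or $s = b$, contradicting our hypothesis. (Note $h(s)$ has no other subterms: its proper subterms are exactly the subterms of $s$.) Hence $h(s)$ has no redex and is irreducible.

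For the converse ($\Rightarrow$), I would argue the contrapositive: if $s = a$ or $s = b$, then $h(s)$ is $h(a)$ or $h(b)$, which is exactly a left-hand side of a rule in $R_1^{}$, hence reducible. Combined with the first direction this gives the biconditional. One small bookkeeping point worth stating explicitly is that $a$ and $b$ are themselves irreducible (they are constants, not instances of any left-hand side), so the hypothesis ``$s$ is irreducible'' does not already exclude the cases $s=a$, $s=b$ — that is precisely why the side condition is needed and why the lemma is not vacuous.

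There is essentially no hard step here; the only thing to be careful about is the case analysis on positions of a potential redex in $h(s)$, namely that the redex is either at the root or strictly below the root (inside $s$), and that the ground left-hand sides collapse ``instance of $h(a)$'' to ``equals $h(a)$''. If one wanted to be fully rigorous one could phrase it via $\Pos(h(s)) = \{\epsilon\} \cup \{1 p \mid p \in \Pos(s)\}$ and check each case, but the argument is short enough that the prose case split suffices. The lemma is the base observation that lets the algorithm in Appendix~A decide, for an asymmetric equation $X \approx_{\downarrow}^{?} h(Z)$, that it can be satisfied in normal form exactly when $Z$ is not forced to $a$ or $b$ — which is recorded as a clausal constraint.
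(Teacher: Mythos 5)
Your proof is correct and follows essentially the same route as the paper's: one direction observes that $s=a$ or $s=b$ makes $h(s)$ literally a left-hand side, and the other locates any redex of $h(s)$ either inside $s$ (excluded by irreducibility of $s$) or at the root (forcing $s=a$ or $s=b$). You merely spell out the position case analysis that the paper's two-line proof leaves implicit.
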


\begin{proof}
If $s = a$ or $s = b$, then clearly $h(s)$ is reducible.
Conversely, if $s$ is irreducible and $h(s)$ is
reducible, then $s$ has to be either~$a$ (for the 
first rule to apply) or~$b$ (for the second rule).
\end{proof}

Hence we first apply the following inference rule (until finished)
that gets rid of asymmetry:\\

\begin{tabular}{lcc}
 &  & $\vcenter{
\infer{\eq ~ \uplus ~ \{X \approx_{}^? h(Y)\} \; \parallel \; \Gamma ~ \cup ~ \{
\neg(Y = a) \} ~ \cup ~ \{\neg(Y = b) \}  }
      { \eq ~ \uplus ~ \{X \approx_{\downarrow}^? h(Y)\} \; \parallel \; \Gamma}
}
$\\[+30pt]
\end{tabular}


\noindent 
\begin{appxlem} (Cancellativity) 
$h(s)$ $\downarrow_{R_1}$ $h(t)$ iff $s$ $\downarrow_{R_1}$
$t$. Similarly, $f(s_1, s_2)$ $\downarrow_{R_1}$ $f(t_1, t_2)$ iff
$s_1$ $\downarrow_{R_1}$~$t_1$ and $s_2$ $\downarrow_{R_1}$ $t_2$.
\end{appxlem}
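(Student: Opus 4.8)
The plan is to prove both directions by analysing rewrite derivations in $R_1$, exploiting the fact established just above (reversing the rules also yields a convergent system) and the shape of the two rules, whose left-hand sides are $h(a)$ and $h(b)$ and whose right-hand sides are $f(a,c)$ and $f(b,c)$. The key structural observation is that $R_1$ is \emph{non-subterm-collapsing} and, more importantly, every rewrite step either changes an $h$-rooted subterm into an $f$-rooted subterm or (using the reversed system for intuition) vice versa; in particular a step applied strictly below the root of a term $h(s)$ cannot touch the root $h$, and a step applied at the root of $h(s)$ is possible only when $s \in \{a,b\}$, by Lemma~A.1. I would first record as normal forms the objects $h(s)\!\downarrow$ and $f(s_1,s_2)\!\downarrow$: since $R_1$ is convergent, $h(s)\!\downarrow_{R_1} h(t)$ is equivalent to $\mathrm{nf}(h(s)) = \mathrm{nf}(h(t))$, where $\mathrm{nf}$ denotes the unique normal form, and similarly for the $f$ cases. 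So it suffices to understand $\mathrm{nf}(h(s))$ and $\mathrm{nf}(f(s_1,s_2))$ in terms of $\mathrm{nf}(s)$, $\mathrm{nf}(s_1)$, $\mathrm{nf}(s_2)$.

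For the $h$ case: first normalise $s$ to $\mathrm{nf}(s)$; this yields $h(\mathrm{nf}(s))$, which is a term rewriting to $\mathrm{nf}(h(s))$. Now either $\mathrm{nf}(s) \notin \{a,b\}$, in which case $h(\mathrm{nf}(s))$ is already irreducible by Lemma~A.1 (no rule applies at the root, none below since $\mathrm{nf}(s)$ is irreducible), so $\mathrm{nf}(h(s)) = h(\mathrm{nf}(s))$; or $\mathrm{nf}(s) = a$ (resp.\ $b$), in which case one more root step gives $f(a,c)$ (resp.\ $f(b,c)$), which is irreducible, so $\mathrm{nf}(h(s)) = f(\mathrm{nf}(s),c)$. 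In all cases $\mathrm{nf}(h(s))$ is determined by $\mathrm{nf}(s)$, and conversely $\mathrm{nf}(s)$ is recoverable from $\mathrm{nf}(h(s))$ (if it is $h(\cdot)$ read off the argument; if it is $f(d,c)$ with $d\in\{a,b\}$ then $\mathrm{nf}(s)=d$). Hence $\mathrm{nf}(h(s)) = \mathrm{nf}(h(t))$ iff $\mathrm{nf}(s) = \mathrm{nf}(t)$, which is exactly $h(s)\!\downarrow h(t)$ iff $s\!\downarrow t$. The $f(s_1,s_2)$ case is analogous but needs a small case split: normalise $s_1,s_2$; if $(\mathrm{nf}(s_1),\mathrm{nf}(s_2)) = (a,c)$ or $(b,c)$ then a root step produces $h(a)$ or $h(b)$, which is irreducible, so $\mathrm{nf}(f(s_1,s_2)) = h(\mathrm{nf}(s_1))$; otherwise $f(\mathrm{nf}(s_1),\mathrm{nf}(s_2))$ is already irreducible. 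Either way $\mathrm{nf}(f(s_1,s_2))$ is a function of the pair $(\mathrm{nf}(s_1),\mathrm{nf}(s_2))$ and determines it, giving $f(s_1,s_2)\!\downarrow f(t_1,t_2)$ iff $s_1\!\downarrow t_1$ and $s_2\!\downarrow t_2$.

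The "$\Leftarrow$" directions are immediate (congruence of $\downarrow$), so the content is entirely in "$\Rightarrow$", and the one real subtlety to be careful about is confluence: I must invoke that $R_1$ is convergent so that joinability coincides with equality of normal forms — otherwise the case analysis on where rewrite steps occur would have to be done directly on derivations, which is messier. I expect the main obstacle to be making the $f$-case split exhaustive and clean, in particular verifying that no rule ever applies below the root inside $f(\mathrm{nf}(s_1),\mathrm{nf}(s_2))$ once the arguments are normal (true because all redexes of $R_1$ are $h$-rooted or $f$-rooted with specific constant arguments, and the only new redex that normalisation of the arguments can create is at the root). Once that bookkeeping is pinned down, the equivalences fall out mechanically.
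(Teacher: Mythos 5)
Your argument is sound and follows essentially the same route as the paper's: reduce joinability to equality of unique normal forms (using convergence of $R_1$), normalise the arguments, and case-split on whether a root redex fires. Your treatment of the $h$ case is in fact more explicit than the paper's, which ends with the rather terse ``Hence $s$ and $t$ will be equivalent.'' One correction, though: in the $f$ case you assert that when $(\mathrm{nf}(s_1),\mathrm{nf}(s_2))$ is $(a,c)$ or $(b,c)$ ``a root step produces $h(a)$ or $h(b)$.'' That is a step of the \emph{reversed} system, not of $R_1$: the rules of $R_1$ are $h(a)\rightarrow f(a,c)$ and $h(b)\rightarrow f(b,c)$, so no left-hand side of $R_1$ is $f$-rooted, and $f(\mathrm{nf}(s_1),\mathrm{nf}(s_2))$ is therefore always already the normal form of $f(s_1,s_2)$. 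The slip happens to be harmless --- the normal-form map you describe is still injective on pairs of normal forms, so the equivalence goes through, and the true situation is only simpler --- but as written that sentence describes a rewrite that $R_1$ cannot perform; this is precisely why the paper dismisses the $f$ part as ``straightforward.''
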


\noindent
\begin{proof} The $\emph{if}$ part is straightforward. If $s$
and $t$ are joinable, this implies $h(s)$ and $h(t)$ are joinable
modulo $R_1$.\\ 

$\emph{Only\;if}$ part: Suppose $h(s)$ is joinable
with $h(t)$. Without loss of generality assume $s$ and $t$ are in
normal form. If $s=t$ then we are done. Otherwise, if $s \not = t$,
since we assumed $s$ and $t$ are in normal forms, $h(s)$ or $h(t)$
must be reducible. If $h(s)$ is
reducible, then $s$ has to be either $a$ or $b$, which reduces $h(s)$
to $f(s, c)$. Then $h(t)$ must also be reducible and joinable with
$f(s, c)$. Hence $s$ and $t$ will be equivalent.\\
The proof of the second part is straightforward.
\end{proof}

\noindent 
\begin{appxlem}  (Root Conflict) 
$h(s) \downarrow_{R_1} f(t_1, t_2)$ iff either
\begin{center}$s \rightarrow^! a, t_1 \rightarrow^! a, t_2 \rightarrow^! c$\end{center} \begin{center}or\end{center} \begin{center}$s \rightarrow^! b, t_1 \rightarrow^!~b, t_2 \rightarrow^!~c$.\end{center}
\end{appxlem}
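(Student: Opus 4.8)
The plan is to prove the equivalence in Lemma~A.4 (Root Conflict) by analyzing the possible rewrite sequences that could cause $h(s)$ and $f(t_1,t_2)$ to have a common reduct. I would first dispatch the ``if'' direction, which is essentially a computation: if $s \rightarrow^! a$, then $h(s) \rightarrow^* h(a) \rightarrow f(a,c)$, and if moreover $t_1 \rightarrow^! a$ and $t_2 \rightarrow^! c$, then $f(t_1,t_2) \rightarrow^* f(a,c)$, so the two terms are joinable; the case with $b$ is symmetric. This uses only that $R_1$ is convergent (so normal forms exist and are unique) and that $\rightarrow_{R_1}$ is closed under contexts.

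For the ``only if'' direction I would argue as follows. Assume $h(s) \downarrow_{R_1} f(t_1,t_2)$, and by convergence let $\hat s, \hat t_1, \hat t_2$ be the normal forms of $s, t_1, t_2$. Since $R_1$ only has rules with left-hand sides $h(a)$ and $h(b)$, a term $h(u)$ with $u$ irreducible is itself reducible only when $u \in \{a,b\}$ (this is exactly Lemma~A.2, Removing asymmetry), and in that case it rewrites in one step to $f(u,c)$, which is then in normal form. So the normal form of $h(s)$ is either $h(\hat s)$ (when $\hat s \notin \{a,b\}$) or $f(\hat s, c)$ (when $\hat s \in \{a,b\}$). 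Likewise, $f(t_1,t_2)$ reduces to $f(\hat t_1, \hat t_2)$, and this is in normal form because no rule of $R_1$ has an $f$-rooted left-hand side. By uniqueness of normal forms, $h(s)$ and $f(t_1,t_2)$ having a common reduct forces their normal forms to coincide: $f(\hat t_1,\hat t_2)$ must equal the normal form of $h(s)$. Since $f(\hat t_1,\hat t_2)$ is $f$-rooted, it cannot equal $h(\hat s)$, so we must be in the second case, i.e.\ $\hat s \in \{a,b\}$ and $f(\hat s,c) = f(\hat t_1,\hat t_2)$, whence $\hat t_1 = \hat s$ and $\hat t_2 = c$. If $\hat s = a$ this gives $s \rightarrow^! a$, $t_1 \rightarrow^! a$, $t_2 \rightarrow^! c$; if $\hat s = b$ it gives the other disjunct.

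I expect the only mild subtlety — the ``main obstacle,'' such as it is — to be making precise the claim that the normal form of $h(s)$ is obtained by first normalizing $s$ and then checking whether the result is $a$ or $b$. This is intuitively clear because $R_1$ is convergent and the outermost $h$ can only ever be touched by a rule once $s$ has been fully normalized to $a$ or $b$; formally one can invoke confluence to justify normalizing the argument first (reductions strictly inside $h(\cdot)$ commute with the possible outermost step), together with Lemma~A.2. Everything else is a finite case check over the two rules of $R_1$, so the proof should be short once this point is stated cleanly.
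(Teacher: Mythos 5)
Your proposal is correct and follows essentially the same route as the paper's proof: normalize $s, t_1, t_2$, observe that $f(\hat t_1,\hat t_2)$ is then already in normal form, conclude that $h(s)$ must reduce at the root (forcing $\hat s \in \{a,b\}$ via Lemma~A.2), and match components. You simply spell out in more detail the step the paper compresses into ``then $h(s)$ must be reducible.''
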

\noindent
\begin{proof} The \emph{if} part is straightforward. 
If $s$ and $t_1$ reduce to $a$ (resp., $b$) and $t_2$ reduces to $c$,
then $h(a)$ reduces to $f(a, c)$ (resp., $f(b, c)$).\\

$\emph{Only\;if}$ part: Suppose $h(s)$ is joinable with $f(t_1,\;t_2)$
modulo $R_1$.  We can assume wlog that $s,\;t_1,\;t_2$ are in normal
forms. Then $h(s)$ must be reducible, i.e., $s = a$ or $s = b$. If $s
= a$, then $t_1=a$ and $t_2=c$; else if $s = b$, then $t_1=b$ and
$t_2=c$ (from our rules). 
\end{proof}

Now for $E$-unification, we have the inference rules\\[+10pt]

\begin{tabular}{lcc}
(a) & & $\vcenter{
\infer[\qquad \mathrm{if} ~ X ~ \mathrm{occurs ~ in} ~ \eq \;\; or \;\; \Gamma]{\{X \approx_{}^? V\} \; \cup \; [V/X](\eq) \; \parallel \; [V/X](\Gamma) }
      { \{X \approx_{}^? V\} ~ \uplus ~ \eq \; \parallel \; \Gamma}
}
$\\[+30pt]
(b) & & $\vcenter{
\infer{\eq ~ \cup ~ \{ X \approx_{}^? h(Y), \; T \approx_{}^? Y \} \; \parallel \; \Gamma}
{\eq ~ \uplus ~ \{ X \approx_{}^? h(Y), \; X \approx_{}^? h(T) \} \; \parallel \; \Gamma}
}
$\\[+30pt]

(c) & & $\vcenter{
\infer{\eq ~ \cup ~ \{ X \approx_{}^? f(V, Y), \; W \approx_{}^? V, \; T \approx_{}^? Y \} \; \parallel \; \Gamma}
{\eq ~ \uplus ~ \{ X \approx_{}^? f(V, Y), \; X \approx_{}^? f(W, T) \} \; \parallel \; \Gamma}
}
$\\[+30pt]
(d) & & $\vcenter{
\infer{\eq ~ \cup ~ \{ U \approx_{}^? Y , \; V \approx_{}^? c , \; X \approx_{}^? f(Y, V) \} \; \parallel \; \Gamma ~ \cup ~ \{ (Y = a) \; \vee \; (Y = b) \} }
{\eq ~ \uplus ~ \{ X \approx_{}^? h(Y), \; X \approx_{}^? f(U, V) \} \; \parallel \; \Gamma}
}
$\\[+30pt]
\end{tabular}

The above inference rules are applied with rule~(a) having the highest
priority and rule~(d) the lowest. 

The following are the failure rules, which, of course, have the highest priority.\\

\begin{tabular}{lcl}

$(F1)$ & & $\vcenter{
\infer[\qquad d \in \{a,b,c\}]{FAIL}
{\eq ~ \uplus ~ \{ X \approx_{}^? d, \; X \approx_{}^? f(U, V) \} \; \parallel \; \Gamma }
}
$\\[+30pt]
$(F2)$ & & $\vcenter{
\infer[\qquad \; \; \; \, d \in \{a,b,c\}]{FAIL}
{\eq ~ \uplus ~\{ X \approx_{}^? d, \; X \approx_{}^? h(V) \} \; \parallel \; \Gamma}
}
$\\[+30pt]
$(F3)$ & & $\vcenter{
\infer[\qquad \qquad \, d \in \{a,b\}]{FAIL}
{\eq ~ \uplus ~\{ X \approx_{}^? c, \; X \approx_{}^? d \} \; \parallel \; \Gamma}
}
$\\[+30pt]
$(F4)$ & & $\vcenter{
\infer{FAIL }
{\eq ~ \uplus ~\{ X \approx_{}^? b, \; X \approx_{}^? a \} \; \parallel \; \Gamma}
}
$\\[+30pt]
\end{tabular}

\noindent 
\begin{appxlem} 
$R_1^{}$ is \emph{non-subterm-collapsing,} i.e., no 
term is equivalent to a proper subterm of it.
\end{appxlem}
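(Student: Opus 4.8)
The plan is to argue that the equational theory $\mathcal{E}(R_1)$ cannot equate any term to a proper subterm of itself, using the fact that $R_1 = \{h(a) \to f(a,c),\, h(b) \to f(b,c)\}$ is convergent (as noted in the paper, both orientations terminate and the system is confluent). By the characterization given earlier in the excerpt, it suffices to show there is no term $s$ and position $p \in \Pos(s)$ with $s \rightarrow_{R_1}^{+} s|_p$. First I would observe that every rule $l \to r$ of $R_1$ satisfies a strict size/structure condition: $|l| = 2$ and $|r| = 3$, and more importantly each rewrite step strictly increases term size (the number of symbol occurrences), since $h(a) \rightsquigarrow f(a,c)$ replaces $2$ symbols by $3$. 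Hence $s \rightarrow_{R_1}^{+} t$ implies $|s| < |t|$ strictly.

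The key step is then purely combinatorial: if $s \rightarrow_{R_1}^{+} s|_p$ for some $p$, then $|s| < |s|_p|$ by the size-increase observation. But $s|_p$ is a subterm of $s$, so $|s|_p| \le |s|$, with equality only when $p$ is the root position $\varepsilon$ (in which case $s|_p = s$ and no nonempty rewrite sequence can return $s$ to itself by termination). This is an immediate contradiction, so no such $s$ and $p$ exist, which is exactly the statement that $R_1$ is non-subterm-collapsing. I would phrase this directly: for a proper subterm we have $|s|_p| < |s|$, contradicting $|s| < |s|_p|$.

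The only subtlety — and the one point I would state carefully rather than wave at — is the reduction from "no term equals a proper subterm of itself in $\approx_{\mathcal{E}(R_1)}$" to "no term rewrites to a proper subterm of itself in $\rightarrow_{R_1}^{+}$." This is precisely the equivalence recorded earlier in the excerpt for theories presented by a convergent rewrite system: $\approx_E$ is subterm-collapsing iff $s \rightarrow_R^{+} s|_p$ for some $s$ and $p \in \Pos(s)$. So I would cite that equivalence, note that $R_1$ is convergent, and then the size argument finishes the proof. I do not expect any real obstacle here; the monotonic growth of term size under $\rightarrow_{R_1}$ does all the work, and the convergence of $R_1$ is already established in the appendix text preceding the lemma.
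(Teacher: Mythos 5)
Your proposal is correct and uses exactly the argument the paper gives: the rules of $R_1$ are strictly size-increasing, so no term can rewrite to a proper subterm of itself, and the reduction from equational collapse to rewriting collapse is the convergence-based characterization already recorded in the preliminaries. The paper's proof is a one-line version of the same reasoning; your additional care about the root-position case and the citation of the characterization are fine but not a different route.
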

\begin{proof} Since the rules in $R_1$ are size increasing, no term can be reduced to a proper subterm of it.
\end{proof}

Because of the above lemma, we can have an extended occur-check or cycle
check~\cite{JouannaudKirchner91} as another failure rule.\\[-10pt]

\begin{tabular}{lcc}
(F5)& & $\vcenter{
\infer{FAIL}
      { \{X_0^{} \approx_{}^? s_1^{}[X_1^{}], \; \ldots , \; X_n^{} \approx_{}^? s_n^{}[X_0^{}] \} ~ \uplus ~ \eq \; \parallel \; \Gamma}
}
$\\[+10pt]
\end{tabular}

\noindent
where the $X_i^{}$'s are variables and $s_j^{}$'s are non-variable terms.

Once these inference rules have been exhaustively applied, we are left with
a set of equations in \emph{dag-solved form} along with clausal
constraints. Thus the set of equations is of the form \[ \left\{ \vphantom{b^b}
X_1^{} =^? t_1^{} , \; \ldots \, ~ , \; X_m^{} =^? t_m^{} \right\} \] where
the variables on the left-hand sides are all distinct (i.e.,
$X_i^{} \neq X_j^{}$ for~$i \neq j$). The clausal constraints are 
either negative unit clauses of the form~$\neg(Y = a)$ or~$\neg(Y = b)$
or positive two-literal clauses of the form~$(Y = a) \; \vee \; (Y = b)$.
The solvability of such a system of equations and clauses
can be checked in polynomial time.\\

\pagebreak

\noindent
$\underline{\emph{Steps\;for\;polynomial\;time\;solvability\; of\;equations\;and\;clauses}}$: 
\begin{enumerate}
\item Add to the list of clauses $\Gamma$ more clauses 
derived from the solved form, to generate
$\, \Gamma_{}^{\prime} \,$. For example if we have an equation of the
form $X \approx^? h(Y)$, then $X \not= a$ and $X \not= b$ will be
added to $\, \Gamma_{}^{\prime} \,$.
\item Check for satisfiability of $\, \Gamma_{}^{\prime}$ by unit resolution
with the negative clauses.
\end{enumerate}

\noindent
\textbf{Soundness} of this algorithm follows from the lemmas
\emph{\textbf{A.1}} through \emph{\textbf{A.4.}}\\

As for
\textbf{termination}, we first observe that none of the inference rules
introduce a new variable, i.e., the number of variables never increases.
With the first inference rule which removes
asymmetry, asymmetric equations are eliminated from $\eq$, i.e., the number
of asymmetric equations goes down. For the
$E$-unification rules, we can see
that in each case either the overall size of equations decreases or some
function
symbols are lost.  
In rule $(a)$, we replace $X$ by $V$ and are left with
an isolated~$X$, hence the number of unsolved variables go down~\cite{Term}.
In rules $(b) \text{ and } (d)$ the number of occurrences of~$h$ goes
down and in rule~$(c)$ the number of occurrences of~$f$ goes down.
\\

\ignore{
(h) & & $\vcenter{
\infer[\qquad \mathrm{if} ~ Y \, \succ_h^+ \, X]{\{ X \approx_{}^? h(X) \} \; \cup \; [X/Y](\eq) \}}
{\eq ~ \uplus ~ \{ X \approx_{}^? h(Y) \}}
}
$\\[+30pt]
}

\label{appendixAend}

\pagebreak


\section{Asymmetric Unifiability modulo $R_5^{}$}
\label{appendix2}

\noindent
Recall that the term rewriting system~$R_5^{}$ is
\begin{eqnarray*}
g(a) & \rightarrow & f(a,a,a)\\
g(b) & \rightarrow & f(b,b,b)
\end{eqnarray*}

We assume that the input equations are in standard form, i.e.,
of one of four kinds: $X \approx_{}^? Y$, $X \approx_{}^? g(Y)$,
$X \approx_{}^? f(U,V,W)$ and $X \approx_{}^? d$ where $X, Y, U, V, W$~are variables
and $d$~is any constant. Asymmetric equations will have the extra downarrow,
e.g., $X \approx_{\downarrow}^{?} g(Y)$.\\

As in Appendix~\emph{A}, our algorithm transforms an asymmetric unification problem to a
set of equations in {dag-solved form}
along with {clausal constraints,} where each atom 
is of the form \mbox{$( \langle variable \rangle \; = \; \langle constant \rangle )$}. 
We use the notation
$EQ \; \parallel \; \Gamma$, where $EQ$ is set of equations in
standard form as mentioned above, and $\Gamma$~is a set of clausal
constraints. Initially $\Gamma$ is empty.\\

We first apply the following inference rule (until finished)
that gets rid of asymmetry:\\

\begin{tabular}{lcc}
 &  & $\vcenter{
\infer{\eq ~ \uplus ~ \{X \approx_{}^? g(Y)\} \; \parallel \; \Gamma ~ \cup ~ \{
\neg(Y = a) \} ~ \cup ~ \{\neg(Y = b) \}  }
      { \eq ~ \uplus ~ \{X \approx_{\downarrow}^? g(Y)\} \; \parallel \; \Gamma}
}
$\\[+30pt]
\end{tabular}

Now for $E$-unification, we have the inference rules\\[+10pt]

\begin{tabular}{lcc}
(a) & & $\vcenter{
\infer[\qquad \mathrm{if} ~ X ~ \mathrm{occurs ~ in} ~ \eq ]{\{X \approx_{}^? V\} \; \cup \; [V/X](\eq) \; \parallel \; [V/X](\Gamma) }
      { \{X \approx_{}^? V\} ~ \uplus ~ \eq \; \parallel \; \Gamma}
}
$\\[+30pt]
(b) & & $\vcenter{
\infer{\eq ~ \cup ~ \{ X \approx_{}^? g(Y), \; T \approx_{}^? Y \} \; \parallel \; \Gamma}
{\eq ~ \uplus ~ \{ X \approx_{}^? g(Y), \; X \approx_{}^? g(T) \} \; \parallel \; \Gamma}
}
$\\[+30pt]
(c) & & $\vcenter{
\infer{\eq ~ \cup ~ \{ X \approx_{}^? f(U_1, V_1, W_1), \; U_1 \approx_{}^? U_2, \; V_1 \approx_{}^? V_2, \; W_1 \approx_{}^? W_2 \} \; \parallel \; \Gamma}
{\eq ~ \uplus ~ \{ X \approx_{}^? f(U_1, V_1, W_1), \; X \approx_{}^? f(U_2, V_2, W_2) \} \; \parallel \; \Gamma}
}
$\\[+30pt]
(d) & & $\vcenter{
\infer{\eq ~ \cup ~ \{ U \approx_{}^? Y , \; V \approx_{}^? Y ,\; W \approx_{}^? Y , \; X \approx_{}^? f(Y, Y, Y) \} \; \parallel \; \Gamma ~ \cup ~ \{ (Y = a) \; \vee \; (Y = b) \} }
{\eq ~ \uplus ~ \{ X \approx_{}^? g(Y), \; X \approx_{}^? f(U, V, W) \} \; \parallel \; \Gamma}
}
$\\[+30pt]
\end{tabular}

The inference rules are applied in the descending order of priority
from~(a), the highest, to ~(d) the lowest. Occurrence of equations of
the form $X\approx_{}^?a$ and $X\approx_{}^?f(U, V, W)$ will make 
the equations unsolvable. Hence we have failure rules as in Appendix~\emph{A}.
Since the equational theory is non-subterm-collapsing, we have an
extended occur-check or cycle check rule here as well:\\[-7pt]

\begin{tabular}{lcc}
(Cycle-check)& & $\vcenter{
\infer{FAIL}
      { \{X_0^{} \approx_{}^? s_1^{}[X_1^{}], \; \ldots , \; X_n^{} \approx_{}^? s_n^{}[X_0^{}] \} ~ \uplus ~ \eq \; \; \parallel \; \; \Gamma}
}
$\\[+10pt]
\end{tabular}

\noindent
where the $X_i^{}$'s are variables and $s_j^{}$'s are non-variable terms.

After exhaustively applying these inference rules we are 
left with
a set of equations in \emph{dag-solved form} along with clausal
constraints. Recall that the clausal constraints are 
either unit clauses of the form~$\neg(W = a)$ or~$\neg(W = b)$
or positive two-literal clauses of the form~$(W = a) \; \vee \; (W = b)$.
The solvability of such a system of equations and clauses
can be checked in polynomial time as described in Appendix~$A$.\\

Similarly, soundness and termination can be shown
as is done in Appendix~$A$.

\ignore{
(h) & & $\vcenter{
\infer[\qquad \mathrm{if} ~ Y \, \succ_h^+ \, X]{\{ X \approx_{}^? h(X) \} \; \cup \; [X/Y](\eq) \}}
{\eq ~ \uplus ~ \{ X \approx_{}^? h(Y) \}}
}
$\\[+30pt]
}

\label{appendixBend}

\pagebreak
\section{Automata Constructions}

\label{appendix3}

We illustrate how automata are constructed for each equation in standard
form. In order to avoid cluttering up the diagrams
the dead state has been included only for the first automaton.
The missing transitions lead to the dead state by default
for the others. Recall that we are considering the case of one constant~$a$. 
The homomorphism~$\mathsf{h}$ is treated as successor function.\\
\subsection{$\mathsf{P=Q+R}$}
\begin{tikzpicture}[shorten >=1pt,node distance=4cm,on grid,auto]
   \node[state,initial,accepting] (q_0) {$q_0$};
   \node[state] (D) [below=of q_0] {$D$};
   \path[->]
    (q_0) edge [loop above] node 
    {$ \begin{psmallmatrix}
           0 \\
           0 \\
           0 
         \end{psmallmatrix}, \begin{psmallmatrix}
           0 \\
           1 \\
           1 
          \end{psmallmatrix}, \begin{psmallmatrix}
           1 \\
           0 \\
           1 
          \end{psmallmatrix}, \begin{psmallmatrix}
           1 \\
           1 \\
           0 
          \end{psmallmatrix}$} ()
             edge node {$ \begin{psmallmatrix}
           0 \\
           0 \\
           1 
          \end{psmallmatrix}, \begin{psmallmatrix}
           0 \\
           1 \\
           0 
          \end{psmallmatrix}, \begin{psmallmatrix}
           1 \\
           0 \\
           0 
          \end{psmallmatrix}, \begin{psmallmatrix}
           1 \\
           1 \\
           1
         \end{psmallmatrix}$} (D)
       (D) edge [loop below] node{$\begin{psmallmatrix}
           0 \\
           0 \\
           0 
          \end{psmallmatrix},\begin{psmallmatrix}
           0 \\
           0 \\
           1 
          \end{psmallmatrix},\begin{psmallmatrix}
           0 \\
           1 \\
           0 
          \end{psmallmatrix},\begin{psmallmatrix}
           0 \\
           1 \\
           1 
          \end{psmallmatrix},\begin{psmallmatrix}
           1 \\
           0 \\
           0 
          \end{psmallmatrix},\begin{psmallmatrix}
           1 \\
           0 \\
           1 
          \end{psmallmatrix},\begin{psmallmatrix}
           1 \\
           1 \\
           0 
          \end{psmallmatrix},\begin{psmallmatrix}
           0 \\
           1 \\
           1 
          \end{psmallmatrix}$} ();  
\end{tikzpicture}

\noindent
Let $\mathsf{P_i, Q_i} \text{ and } \mathsf{R_i}$ denote 
the ${i^{th}}$ bits of $\mathsf{P, Q} \text{ and } \mathsf{R} \;respectively$. 
$\mathsf{P_i}$ has a value~1, when either
$\mathsf{Q_i}$ or $\mathsf{R_i}$ has a value 1. We need 3-bit alphabet
symbols for this equation. For example, if
$\mathsf{R_2}$ = 0, $\mathsf{Q_2}$ = 1, then $\mathsf{P_2}$ = 1. The corresponding
alphabet symbol is $\begin{psmallmatrix}
           P_2 \\
           Q_2 \\
           R_2
          \end{psmallmatrix}$ = $\begin{psmallmatrix} 1 \\ 0 \\ 1
          \end{psmallmatrix}$. \\
          Hence, only strings with the alphabet symbols $\{$ $\begin{psmallmatrix}
           0 \\
           0 \\
           0 
         \end{psmallmatrix}, \begin{psmallmatrix}
           0 \\
           1 \\
           1 
          \end{psmallmatrix}, \begin{psmallmatrix}
           1 \\
           0 \\
           1 
          \end{psmallmatrix}, \begin{psmallmatrix}
           1 \\
           1 \\
           0 
          \end{psmallmatrix}$ $\}$ are accepted by this automaton. 
Rest of the input symbols like $\{$ $\begin{psmallmatrix}
           0 \\
           0 \\
           1 
         \end{psmallmatrix}, \begin{psmallmatrix}
           1 \\
           1 \\
           1 
          \end{psmallmatrix}, \begin{psmallmatrix}
           0 \\
           1 \\
           0 
          \end{psmallmatrix}, \begin{psmallmatrix}
           1 \\
           0 \\
           0 
          \end{psmallmatrix}$ $\}$ go to the dead state~$D$ as they violate 
the XOR property. 

Note that the string $\begin{psmallmatrix}
           1 \\
           0 \\
           1 
          \end{psmallmatrix} \begin{psmallmatrix}
           1 \\
           1 \\
           0 
          \end{psmallmatrix}$ is accepted by automaton. This corresponds
to $\mathsf{P=a+h(a)}$. 
$\mathsf{Q=h(a)}$ and~$\mathsf{R=a}$.\\
\subsection{{$\mathsf{P=_{\downarrow}Q+R}$}}
\begin{tikzpicture}[shorten >=1pt,node distance=7cm,on grid,auto]
   \node[state,initial] (q_0) {$q_0$};  
   \node[state] (q_1) [right=of q_0] {$q_1$};
   \node[state] (q_3) [below=of q_0] {$q_3$};
   \node[state,accepting] (q_2) [below=of q_1] {$q_2$};
   \path[->]
   (q_0) edge [loop above] node 
    {$ \begin{psmallmatrix}
           0 \\
           0 \\
           0 
         \end{psmallmatrix}$} ()
         edge node {$ \begin{psmallmatrix}
           1 \\
           0 \\
           1 
          \end{psmallmatrix}$} (q_3)
          edge node {$ \begin{psmallmatrix}
           1 \\
           1 \\
           0 
          \end{psmallmatrix}$} (q_1)
          (q_1) edge [loop above] node 
    {$ \begin{psmallmatrix}
           0 \\
           0 \\
           0 
         \end{psmallmatrix},\begin{psmallmatrix}
           1 \\
           1 \\
           0 
         \end{psmallmatrix}$} ()
         edge node {$ \begin{psmallmatrix}
           1 \\
           0 \\
           1 
          \end{psmallmatrix}$} (q_2)
          (q_2) edge [loop right] node 
    {$ \begin{psmallmatrix}
           0 \\
           0 \\
           0 
         \end{psmallmatrix}$,$ \begin{psmallmatrix}
           1 \\
           0 \\
           1 
         \end{psmallmatrix}$,$ \begin{psmallmatrix}
           1 \\
           1 \\
           0 
         \end{psmallmatrix}$} ()
         (q_3) edge [loop left] node 
    {$ \begin{psmallmatrix}
           0 \\
           0 \\
           0 
         \end{psmallmatrix}$,$ \begin{psmallmatrix}
           1 \\
           0 \\
           1 
         \end{psmallmatrix}$} ()
         edge node 
    {$ \begin{psmallmatrix}
           1 \\
           1 \\
           0 
         \end{psmallmatrix}$} (q_2);
 \end{tikzpicture}
 
\noindent
To preserve asymmetry on the right-hand side of this equation, $\mathsf{Q+R}$ should be 
irreducible. If either $\mathsf{Q}$ or $\mathsf{R}$ is empty, or if they 
have any term in common, then a reduction will occur. For example, 
if $\mathsf{Q}$ = $\mathsf{h(a)}$ and $\mathsf{R}$ = $\mathsf{h(a)+a}$, 
there is a reduction, whereas if $\mathsf{R}$ = $\mathsf{h(a)}$ 
and $\mathsf{Q}$ = $\mathsf{a}$, irreducibility is preserved, since 
there is no common term and neither one is empty. 
Since neither $\mathsf{Q}$ nor $\mathsf{R}$ can be empty,
any accepted string should have one occurrence of $\begin{psmallmatrix}
           1 \\
           0 \\
           1 
         \end{psmallmatrix}$ and one occurrence of~$\begin{psmallmatrix}
           1 \\
           1 \\
           0 
         \end{psmallmatrix}$.
\pagebreak
\subsection{$\mathsf{X=h(Y)}$}
\begin{tikzpicture}[>=stealth,shorten >=1pt,auto,node distance=4cm]
    \node[state,initial,accepting] (q_0)                {$q_0$};
    \node[state] (q_1) [right of = q_0] {$q_1$};
    \path[->] (q_0) edge [bend left]  node {$\begin{psmallmatrix}
           1 \\
           0 
          \end{psmallmatrix}$} (q_1)
          edge [loop above] node {$\begin{psmallmatrix}
           0 \\
           0 
          \end{psmallmatrix}$} ()
              (q_1) edge [bend left] node {$\begin{psmallmatrix}
           0 \\
           1 
          \end{psmallmatrix}$} (q_0)
          edge [loop above] node {$\begin{psmallmatrix}
           1 \\
           1 
          \end{psmallmatrix}$} ();
\end{tikzpicture}

\noindent
We need 2-bit vectors as alphabet symbols since we have 
two unknowns $\mathsf{X}$ and $\mathsf{Y}$.
Note again that $\mathsf{h}$ acts like the successor function.
$q_0$ is the only accepting state. 
A state transition occurs with bit vectors $\begin{psmallmatrix}
           1 \\
           0         \end{psmallmatrix},\begin{psmallmatrix}
           0 \\
           1         \end{psmallmatrix}$. If $\mathsf{Y}$=1 in current state, then $\mathsf{X}$=1 in the next state, hence a transition occurs from $\mathsf{q_0}$ to $\mathsf{q_1}$, and vice versa. The ordering of variables is $\begin{psmallmatrix}
           Y \\
           X         \end{psmallmatrix}$.

\subsection{$\mathsf{X=_{\downarrow}h(Y)}$}
\begin{tikzpicture}[shorten >=1pt,node distance=4cm,on grid,auto]
   \node[state,initial] (q_0) {$q_0$};
   \node[state] (q_1) [right=of q_0] {$q_1$};
   \node[state,accepting] (q_2) [below=of q_1] {$q_2$};
   \path[->]
    (q_0) edge node {$\begin{psmallmatrix}
           1 \\
           0 
          \end{psmallmatrix}$} (q_1)
          edge [loop above] node 
    {$ \begin{psmallmatrix}
           0 \\
           0 
         \end{psmallmatrix}$} ()
        (q_1) edge node {$\begin{psmallmatrix}
           0 \\
           1 
          \end{psmallmatrix}$} (q_2)
          (q_2) edge [loop right] node{$\begin{psmallmatrix}
           0 \\
           0
          \end{psmallmatrix}$}();
          \end{tikzpicture}

\noindent
In 
this equation, $\mathsf{h(Y)}$ should be in normal form. So
$\mathsf{Y}$ cannot be 
either 0 or of the form $\mathsf{u + v}$. Thus
$\mathsf{Y}$ has to be a string of the form $0_{}^i 1 0_{}^j$
and $\mathsf{X}$ then has to be $0_{}^{i+1} 1 0_{}^{j-1}$.
Therefore
the bit vector $\begin{psmallmatrix}
           1 \\
           0         \end{psmallmatrix}$ has to be succeeded by $\begin{psmallmatrix}
           0 \\
           1         \end{psmallmatrix}$.\\ 
\pagebreak
\subsection{An Example}

Let $\left\{ \vphantom{b^b} 
{U=_{\downarrow}V+Y}, \; 
{W=h(V)}, \;
{Y=_{\downarrow}h(W)} \right\}$ be an asymmetric unification
problem.
We need 4-bit vectors and 3~automata since we have 4 unknowns in 3 equations, with bit-vectors represented in this ordering of set variables: $ \begin{psmallmatrix}
           V \\
            W\\
           Y \\
           U
         \end{psmallmatrix}$. \\\\\\\\

\noindent
$\mathsf{\bf{Y=_{\downarrow}h(W)}}$\\\\

\begin{tikzpicture}[shorten >=1pt,node distance=8cm,on grid,auto]
   \node[state,initial] (q_0) {$q_0$};
   \node[state] (q_1) [right=of q_0] {$q_1$};
   \node[state,accepting] (q_2) [below=of q_1] {$q_2$};
   \path[->]
    (q_0) edge node {$\begin{psmallmatrix}
           0 \\
           1 \\
           0\\
           0
          \end{psmallmatrix},\begin{psmallmatrix}
           0 \\
           1 \\
           0\\
           1
          \end{psmallmatrix},\begin{psmallmatrix}
           1 \\
           1 \\
           0\\
           0
          \end{psmallmatrix},\begin{psmallmatrix}
           1 \\
           1 \\
           0\\
           1
          \end{psmallmatrix}$} (q_1)
          edge [loop above] node 
    {$ \begin{psmallmatrix}
           0 \\
           0 \\
           0\\
           0
         \end{psmallmatrix},\begin{psmallmatrix}
           1 \\
           0 \\
           0\\
           0
         \end{psmallmatrix},\begin{psmallmatrix}
           0 \\
           0 \\
           0\\
           1
         \end{psmallmatrix},\begin{psmallmatrix}
           1 \\
           0 \\
           0\\
           1
         \end{psmallmatrix}$} ()
        (q_1) edge node {$\begin{psmallmatrix}
           0 \\
           0\\
           1\\
           0 
          \end{psmallmatrix},\begin{psmallmatrix}
           0 \\
           0\\
           1\\
           1 
          \end{psmallmatrix},\begin{psmallmatrix}
           1 \\
           0\\
           1\\
           0 
          \end{psmallmatrix},\begin{psmallmatrix}
           1 \\
           0\\
           1\\
           1 
          \end{psmallmatrix}$} (q_2)
         (q_2) edge [loop right] node 
    {$ \begin{psmallmatrix}
           0 \\
           0 \\
           0\\
           0
         \end{psmallmatrix},\begin{psmallmatrix}
           1 \\
           0 \\
           0\\
           0
         \end{psmallmatrix},\begin{psmallmatrix}
           0 \\
           0 \\
           0\\
           1
         \end{psmallmatrix},\begin{psmallmatrix}
           1 \\
           0 \\
           0\\
           1
         \end{psmallmatrix}$} ();
          \end{tikzpicture}
 
\pagebreak

\noindent
$\mathsf{\bf{U=_{\downarrow}V+Y}}$\\

\noindent
We include the $\times$ (``don't-care'') symbol in state transitions to
indicate that the values can be either $0$ or~$1$. This is essentially
to avoid
cluttering the diagrams. Note that here
this $\times$~symbol is
a placeholder for the variable~$W$ which does not have any
significance in this automaton.

\noindent
\begin{tikzpicture}[shorten >=1pt,node distance=10cm,on grid,auto]
   \node[state,initial] (q_0) {$q_0$};  
   \node[state] (q_1) [right=of q_0] {$q_1$};
   \node[state] (q_3) [below=of q_0] {$q_3$};
   \node[state,accepting] (q_2) [below=of q_1] {$q_2$};
   \path[->]
   (q_0) edge [loop above] node 
    {$ \begin{psmallmatrix}
           0 \\
           \times \\
           0 \\
           0
         \end{psmallmatrix}$} ()
         edge node {$ \begin{psmallmatrix}
           1 \\
           \times \\
           0 \\
           1 
          \end{psmallmatrix}$} (q_3)
           edge node {$ \begin{psmallmatrix}
           0 \\
           \times\\
           1 \\
           1 
          \end{psmallmatrix}$} (q_1)
          (q_1) edge [loop above] node 
    {$ \begin{psmallmatrix}
           0 \\
           \times \\
           0 \\
           0
         \end{psmallmatrix},\begin{psmallmatrix}
           0 \\
           \times \\
           1 \\
           1
         \end{psmallmatrix}$} ()
         edge node {$ \begin{psmallmatrix}
           1 \\
           \times \\
           0 \\
           1 
          \end{psmallmatrix}$} (q_2)
          (q_2) edge [loop right] node 
    {$ \begin{psmallmatrix}
           0 \\
           \times \\
           0 \\
           0
         \end{psmallmatrix},\begin{psmallmatrix}
           1 \\
           \times \\
           0 \\
           1
         \end{psmallmatrix}\begin{psmallmatrix}
           0 \\
           \times \\
           1 \\
           1
         \end{psmallmatrix}$} ()
         (q_3) edge [loop left] node 
    {$ \begin{psmallmatrix}
           0 \\
           \times \\
           0 \\
           0
         \end{psmallmatrix},\begin{psmallmatrix}
           1 \\
           \times \\
           0 \\
           1
         \end{psmallmatrix}$} ()
         edge node 
    {$\begin{psmallmatrix}
           0 \\
           \times\\
           1 \\
           1 
          \end{psmallmatrix}$} (q_2);
 \end{tikzpicture}
 
\pagebreak

\noindent
$\mathsf{\bf{W=h(V)}}$\\

 \begin{tikzpicture}[>=stealth,shorten >=1pt,auto,node distance=8cm]
    \node[state,initial,accepting] (q_0)                {$q_0$};
    \node[state] (q_1) [right of = q_0] {$q_1$};
    \path[->] (q_0) edge [bend left]  node {$\begin{psmallmatrix}
           1 \\
           0 \\
           \times\\
            \times
          \end{psmallmatrix}$} (q_1)
          edge [loop above] node {$\begin{psmallmatrix}
           0 \\
           0 \\
           \times \\
            \times
          \end{psmallmatrix}$} ()
              (q_1) edge [bend left] node {$\begin{psmallmatrix}
              0\\
           1 \\
            \times \\
           \times
          \end{psmallmatrix}$} (q_0)
          edge [loop above] node {$\begin{psmallmatrix}
           1 \\
           1 \\
            \times\\
            \times
          \end{psmallmatrix}$} ();
\end{tikzpicture}\\
\textbf{NOTE}: As before, the symbol $\times$ in the vectors means that 
the bit value can be either~0 or~1. \\

\addtolength{\baselineskip}{12pt}
The string~$\begin{psmallmatrix}
              1\\
           0 \\
            0 \\
           1
          \end{psmallmatrix}$$\begin{psmallmatrix}
              0\\
           1 \\
            0 \\
           0
          \end{psmallmatrix}$$\begin{psmallmatrix}
              0\\
           0 \\
            1 \\
           1
          \end{psmallmatrix}$ $\begin{psmallmatrix}
              0\\
           0 \\
            0 \\
           0
          \end{psmallmatrix}$
is accepted by all the three automata.
The corresponding
asymmetric unifier is \[ \mathsf{ \left\{V \mapsto a, \, W\mapsto h(a), \, 
Y\mapsto h^2(a), \, U \mapsto ( h^2(a)+a ) \right\}}. \]

\label{appendix4}
\pagebreak

\nocite{DBLP:journals/jacm/BuntineB94,DBLP:journals/jsc/ComonL89}
\end{document}